\documentclass[letterpaper,twocolumn,10pt]{article}
\usepackage{usenix}
\usepackage{tabularx,tabulary}
\usepackage{booktabs}
\usepackage{algorithmicx,algpseudocode, algorithm}
\usepackage{amsthm}
\usepackage{tikz}
\usepackage{amsmath}

\newtheorem{theorem}{Theorem}[section]

\newtheorem{lemma}[theorem]{Lemma}

\usepackage{algpseudocode} 
\usepackage{algorithm}
\usepackage{tabularx} 
\usepackage{booktabs}
\usepackage{enumitem}

\usepackage{filecontents}
\usepackage{subcaption} 
\usepackage{enumitem}

\usepackage[subtle]{savetrees}

\begin{document}

\date{}

\title{\Large \bf Neighborhood Blending: A Lightweight Inference-Time Defense Against Membership Inference Attacks
}

\author{
{\rm Osama Zafar}\\
Case Western Reserve University
\and
{\rm Shaojie Zhan}\\
Texas Tech University
\and
{\rm Tianxi Ji}\\
Texas Tech University
\and
{\rm Erman Ayday}\\
Case Western Reserve University
} 

\maketitle

\begin{abstract}
In recent years, the widespread adoption of Machine Learning as a Service (MLaaS), particularly in sensitive environments, has raised considerable privacy concerns. Of particular importance are membership inference attacks (MIAs), which exploit behavioral discrepancies between training and non-training data to determine whether a specific record was included in the model's training set, thereby presenting significant privacy risks. Although existing defenses, such as adversarial regularization, DP-SGD, and MemGuard, assist in mitigating these threats, they often entail trade-offs such as compromising utility, increased computational requirements, or inconsistent protection against diverse attack vectors.

In this paper, we introduce a novel inference-time defense mechanism called Neighborhood Blending, which mitigates MIAs without retraining the model or incurring significant computational overhead. Our approach operates post-training by smoothing the model's confidence outputs based on the neighborhood of a queried sample. By averaging predictions from similar training samples selected using differentially private sampling, our method establishes a consistent confidence pattern, rendering members and non-members indistinguishable to an adversary while maintaining high utility. Significantly, Neighborhood Blending maintains label integrity (zero label loss) and ensures high utility through an adaptive, "pay-as-you-go" distortion strategy. It is a model-agnostic approach that offers a practical, lightweight solution that enhances privacy without sacrificing model utility. Through extensive experiments across diverse datasets and models, we demonstrate that our defense significantly reduces MIA success rates while preserving model performance, outperforming existing post-hoc defenses like MemGuard and training-time techniques like DP-SGD in terms of utility retention.

\end{abstract}

\section{Introduction}
\label{sec:intro}

In recent years, machine learning (ML) has become a cornerstone of modern computing, facilitating substantial advancements across a wide range of applications. From image and speech recognition to natural language processing, recommendation systems, healthcare diagnostics, and autonomous systems, ML models are increasingly integrated into both consumer-oriented and mission-critical services. The availability and accessibility of large-scale datasets, high-performance computing, and the development of sophisticated learning algorithms have fueled this widespread adoption. As a result, ML models are often trained on vast amounts of sensitive data and deployed as services accessible through public or semi-public interfaces.

Although the exceptional performance of these models has driven significant technological advances, it also introduces security and privacy concerns. Unlike traditional software, ML models embed training data into their parameters and outputs, which adversaries can exploit to extract sensitive information. Multiple studies have shown that ML models tend to memorize sensitive information from training data, posing serious privacy risks \cite{Carlini2019,Fredrikson2015,Ganju2018,Hitaj2017,Shokri2017,Song2017}. Adversaries may exploit this memorization by leveraging the differences in the model’s behavior on training and non-training data to determine whether a specific data record was included in the training set of a target model \cite{Shokri2017,yeom2018}. Such privacy attacks, known as membership inference attacks (MIAs), pose a significant privacy threat, as knowledge of whether an individual’s data was used for training can itself disclose sensitive information about that individual \cite{pyrgelis2017,Backes2016}. Since their formal introduction, MIAs have become a canonical privacy threat, particularly for models trained on sensitive or proprietary data. Studies have shown that ML classifiers are vulnerable to MIAs \cite{Nasr_2019,salem2018,Shokri2017,Song2019}. MIAs are especially relevant in deployed ML services such as ML-as-a-Service (MLaaS) \cite{Ribeiro2015MLaaSML}, where adversaries can interact with models via prediction APIs but lack direct access to model parameters or training data. 

To mitigate the privacy risks posed by MIAs, prior studies have introduced various defense strategies aimed at minimizing information leakage during model training or inference. Notably, Jia et al.~\cite{Jia2019memgaurd} developed a post hoc solution called MemGuard, which adds adversarial noise to the confidence vector that is provided as a part of the model's output to obscure the membership signal. This method operates during inference, requires no data retention, and is model-agnostic. Likewise, Nasr et al.~\cite{Nasr2018} proposed a training-time defense that balances predictive accuracy with low inference risk by incorporating adversarial regularization. Several other defense mechanisms have been proposed to mitigate privacy risks associated with membership inference; however, empirical evidence shows that these approaches fail to achieve acceptable privacy-utility tradeoffs for complex models~\cite{Jayaraman2019}. In particular, a comprehensive systematic evaluation study by Song et al.~\cite{Song2020SystematicEO} (see Table~\ref{tab:dataset_defense_comparison}) demonstrated that numerous existing defenses provide robustness within their presumed attack scenarios but tend to lack generalizability when evaluated against a diverse range of other attack methodologies. In practice, acceptable trade-offs exist only in narrow, simple settings, and not for complex models. The existing methods tend to fall into one of two regimes: models are either accurate but not private, or private but not accurate. To achieve accuracy anywhere close to non-private models, the privacy budget must be increased to a degree where the mathematical privacy guarantee effectively becomes non-existent.


\begin{table}[htbp]
\centering
\caption{Membership Inference Attack (MIA) accuracy across datasets under different defense mechanisms. Lower accuracy indicates a stronger defense.}
\label{tab:dataset_defense_comparison}
\small
\begin{tabularx}{\columnwidth}{l X c c c}
\toprule
\textbf{Dataset} & \textbf{Defense} & \textbf{Song et al. \cite{Song2020SystematicEO}}\\
& \textbf{Mechanism} & \textbf{Benchmark}\\
&  & \textbf{Attack Acc.}\\
\midrule
\textbf{Purchase-100} & Adv. Reg. \cite{Nasr2018} & 59.5\%\\
        & Neighborhood-Defense & \textbf{51.1\%} \\
\addlinespace
\textbf{Location-30}  & MemGuard \cite{Jia2019memgaurd}  & 69.1\%\\
        & Neighborhood-Defense  & \textbf{49.7\%} \\
\addlinespace
\textbf{Texas-100}    & Adv. Reg. \cite{Nasr2018}  & 58.6\%\\
        & MemGuard \cite{Jia2019memgaurd}  & 74.2\%\\
        & Neighborhood-Defense  & \textbf{49.4\%} \\
\bottomrule
\end{tabularx}
\end{table}

In this paper, we introduce \textit{Neighborhood Blending}, a novel inference-time defense mechanism designed to protect machine learning models against black-box membership inference attacks (MIA). Unlike traditional defenses that modify the training process, Neighborhood Blending operates post-training by smoothing the model’s confidence outputs based on the neighborhood of a queried sample within the training dataset. By averaging prediction vectors from nearby samples with the same predicted class, our method's smoothed output maintains the original class prediction, guaranteeing zero label loss, while simultaneously masking the subtle behavioral differences that MIAs exploit. This effectively renders the model’s responses statistically indistinguishable and establishes a consistent confidence pattern for both members and non-members, thereby exhibiting uniform behavior from the attacker’s perspective. Across a wide range of datasets and models, Neighborhood Blending consistently reduces membership inference accuracy from as high as $0.95$ to near the random-guessing baseline ($\approx0.50$). When evaluated under the diverse attack benchmarks introduced by Song et al.~\cite{Song2020SystematicEO}, our approach outperforms existing defense mechanisms such as MemGuard and Adversarial Regularization, demonstrating robust protection against both learned and metric-based membership inference attacks. It also preserves model utility by guaranteeing zero label loss and very low confidence vector distortion (see Table~\ref{tab:confidence_differences}). Compared to DP-SGD it provides higher model accuracy and order of magnitude lower confidence vector distortion (Table~\ref{tab:sgd_dpsgd_confidence_differences}). Neighborhood Blending is a practical and lightweight defense that can be applied to any pre-trained classifier without retraining (see details in Section~\ref{sec:evaluation}).


\section{Related Work}
\label{sec:relatedwork}

\subsection{Membership Inference Attacks}

Membership inference attacks (MIAs) aim to determine whether a specific data record was included in a model's training dataset. They exploit systematic differences in a model’s behavior on training versus non-training samples, enabling adversaries to infer membership from model outputs alone. Prior work has demonstrated the effectiveness of MIAs across a wide range of data modalities and learning tasks, as well as the challenges involved in designing effective defenses.

Formally, consider a classification model, $T$ trained on a dataset $D_t$, which consists of sample data and corresponding labels. The primary objective of this model is to accurately learn how to classify input data into predefined categories. When presented with a query sample, $q$, the model $T$ generates a confidence vector, $V_q$. This vector represents the likelihood that the sample belongs to each class, with the probabilities summing to 1. The predicted class label is determined by the class corresponding to the maximum probability in $V_q$, represented by $\arg\max(V_q)$. An attacker, $A$, queries the model, $T$, with random samples, and receives output in the form of confidence vectors $V_q$. $A$ examines how the model reacts to different samples to decide if they are members of the training dataset or not.



Shokri et al. \cite{Shokri2017} initially formalized membership inference attacks by showing how an adversary can determine if a data sample was part of a target black-box classification model's training set. In their approach, the attacker trains one or more binary attack models that receive confidence score vectors from the target model and decide membership. To develop these attack models, the adversary employs a shadow-training strategy by training multiple shadow models on auxiliary datasets derived from the same distribution as the training data of the target model. Since the attacker has control over the shadow models, they are able to distinguish between member and non-member samples, thereby enabling supervised training of the attack models using their prediction outputs.

Salem et al. \cite{salem2018} expanded this research, showing many of Shokri et al.'s assumptions can be relaxed without greatly impacting attack success. They demonstrated that often only one shadow model is needed, which does not have to match the target model's architecture. They also proved that auxiliary data from different distributions can still enable successful attacks. These findings lower the cost of membership inference attacks and suggest black-box ML models are more vulnerable than previously thought.

Subsequent work demonstrated that membership inference remains effective even when target models are well-generalized. Long et al. \cite{long2020pragmatic} showed that although average-case inference accuracy may be low, certain training records, such as outliers, are significantly more vulnerable, enabling high-precision inference when attacks selectively target them. Carlini et al. \cite{carlini2022Lira} proposed the Likelihood Ratio Attack (LiRA), which frames membership inference as a statistical hypothesis test by comparing the target model’s confidence against distributions learned from shadow models trained with and without the target sample, enabling accurate inference at low false positive rates. Beyond black-box settings, Nasr et al. \cite{Nasr_2019} introduced membership inference attacks in the white-box setting by leveraging gradients of the target model with respect to individual data samples as inference features. They further extended these techniques to federated learning, showing that intermediate training updates can leak membership information. Similarly, Melis et al. \cite{melis2018exploitingunintendedfeatureleakage} demonstrated that participants in federated learning can infer the membership of other users’ data by analyzing shared model updates, highlighting that distributed training protocols are also susceptible to membership leakage.
In parallel, it has been shown that a class of metric-based attacks avoids training explicit attack models altogether and instead relies on simple statistics derived from the target model’s outputs. Yeom et al. \cite{yeom2018} proposed inferring membership based on prediction correctness or prediction loss, exploiting the fact that models typically achieve lower loss on training samples. Salem et al. \cite{salem2018} further explored confidence-based and entropy-based metrics, showing that training samples often yield higher confidence and lower entropy predictions than non-members. Song et al. \cite{Song2020SystematicEO} refined entropy-based attacks by incorporating ground-truth labels, mitigating errors caused by confidently incorrect predictions.

Overall, these attacks highlight that simply having black-box access is enough to successfully perform membership inference across different models and deployment settings. Our focus is on black-box models that provide confidence scores, which are the most common in practical machine learning today. These scores are important for transparency because they help users understand how reliable a prediction is; at the same time, they can also be used by attackers as a key signal. The various attack methods, from shadow-model techniques to simple metric-based approaches, show that membership leakage is a fundamental challenge in machine learning systems. They also demonstrate how difficult it can be to protect against these attacks quickly and with ready-made solutions.

\subsection{Mitigation Techniques}

Various defense strategies are being developed to help protect against membership inference attacks. These approaches aim to reduce information leakage during model training or inference and make it harder to tell apart training samples from others, creating a safer environment for data privacy. Early approaches focus on minimizing overfitting through regularization techniques such as L2 regularization \cite{Shokri2017} and dropout \cite{Srivastava2014dropout}, which serve to narrow the confidence disparities between members and non-members. Nasr et al. \cite{Nasr2018} introduced an adversarial training approach that conceptualizes defense as a min–max optimization problem, with the objective of minimizing classification errors while enhancing resistance to membership inference. Furthermore, ensemble defenses such as model stacking \cite{salem2018} contribute to reducing memorization by training multiple classifiers on distinct datasets and aggregating their outputs. Differential privacy \cite{Dwork2006DP} provides the strongest formal protection by injecting noise into the training process or gradients to bound the influence of individual training samples \cite{Abadi_2016, bassily2014differentiallyprivateempiricalrisk,chaudhuri2011differentiallyprivateempiricalrisk,Song2013StochasticGD,wang2018differentiallyprivateempiricalrisk,Yu_2019}, and has been applied to collaborative and federated learning settings \cite{Shokri2015privacyDL}. More targeted defenses have been proposed, such as adversarial regularization \cite{Nasr2018} that integrates a membership inference adversary into the training process, formulating a min–max optimization that jointly minimizes classification loss while misleading an attack classifier. 

Beyond training-time defenses, inference-time mechanisms have also been explored. Jia et al. \cite{Jia2019memgaurd} proposed MemGuard, a post-training defense that obfuscates model outputs by adding carefully crafted perturbations to confidence score vectors, thereby reducing an adversary’s ability to distinguish between members and non-members without retraining the target classifier. While this approach avoids modifying the training process, it is not entirely training-free or parameter-free, as it relies on auxiliary training of a defense model and per-output noise optimization.

\noindent\textbf{Core limitations:} Training-time regularization-based defenses can offer robust protection against membership inference attacks; however, they often incur substantial utility loss and computational overhead because they require retraining deployed models. In contrast, MemGuard avoids retraining and provides robust protection by applying inference-time noise addition to model outputs. Nevertheless, systematic evaluations by Song et al. \cite{Song2020SystematicEO} reveal that such defenses do not offer uniform robustness across different classes of membership inference attacks (see Table~\ref{tab:dataset_defense_comparison}). In particular, while MemGuard effectively reduces the success of shadow-model-based attacks, it remains vulnerable to metric-based attacks that exploit simple statistical characteristics of model outputs, such as confidence and entropy. These shortcomings underscore the necessity for more comprehensive and pragmatic defense mechanisms.

\section{System Settings}

\label{sec:setting}
In this section, we introduce the system model and the threat model.

\subsection{System Model}
\label{sec:system}

In our system model, we consider three parties: the model provider, the attacker, and the defender. These roles may be assumed by distinct entities or consolidated within a single entity. For example, in certain deployments, the model provider may also act as the defender.

\noindent\textbf{Model Provider:} We consider a model provider with a sensitive training dataset, such as healthcare or finance data. The provider trains a classifier model using this proprietary data and then deploys the model for inference. After training, the model is accessed via an inference interface, such as a prediction API, which provides classification outputs, including confidence score vectors and predicted labels for the input data. The provider's primary goal is to maximize the model's utility and ensure its availability.

\noindent\textbf{Attacker:} The attacker aims to infer the model provider's sensitive training dataset. We assume the attacker has only black-box access. The attacker leverages black-box membership inference attacks \cite{long2018understandingmembershipinferenceswellgeneralized, Nasr2018, salem2018, Shokri2017} to infer whether a given data record was included in the target model's training dataset. The attacker can interact with the deployed model by issuing queries and observing its outputs, but does not have direct access to the model’s parameters, architecture, or training data. The attacker’s goal is to achieve membership inference with accuracy significantly better than random guessing.

\noindent\textbf{Defender:} The defender seeks to protect the privacy of the training data from membership inference attacks. The defender may be the model provider itself or an independent entity responsible for enforcing privacy protections. They can use different defense mechanisms \cite{Shokri2017,Srivastava2014dropout,Nasr2018,salem2018,Song2013StochasticGD,wang2018differentiallyprivateempiricalrisk,Yu_2019} that can be applied either during model training, for example, through regularization or privacy-aware optimization, or at inference time, such as by perturbing model outputs or obfuscating confidence scores. Overall, the goal of the defender is twofold: (i) effectively prevent the attacker from inferring whether a data record was included in the training dataset, and (ii) minimize the loss in model utility introduced by the defense mechanism.

\subsection{Threat Model}
\label{sec:threat}

We consider a practical threat model that closely resembles real-world deployments, such as Machine Learning as a Service (MLaaS). In this context, the adversary operates with an unrestricted query budget and observes model responses for various inputs. However, they are not authorized to modify the model, influence its training process, or access internal details such as gradients or parameters. The adversary may also adapt their queries based on the information acquired from the outputs, thereby devising a more refined and strategic approach.

We assume that the attacker is aware of the general learning task, the output format of the model, and the existence of membership inference defenses. However, the attacker lacks knowledge of the exact training dataset. The defender is presumed to be aware of the potential for membership inference attacks and may implement countermeasures, yet lacks information about the specific queries used by the attacker. A robust adversary is assumed to be capable of employing state-of-the-art membership inference techniques, including adaptive and data-driven attacks. An effective defense is defined as one that substantially diminishes the attacker’s membership inference accuracy without causing prohibitive impairments to the model's utility.

\section{Proposed Solution}
\label{sec:workflow}

MIAs exploit the discrepancies in the model’s behavior between members and non-members of the training samples. Ideally, with a fully generalized model, there should be no difference in the confidence outputs regardless of whether the samples are member or non-member. However, this is not always true; models can develop biases from the data or overfit by memorizing training samples rather than learning patterns. In such cases, models behave differently toward known and unknown samples. These differences, such as more peaked probability distributions, lower entropy, or higher confidence margins, allow adversaries to determine if an input was part of the training data. As shown in previous research \cite{yeom2018,salem2018,Song2020SystematicEO,choquettechoo2021labelonlymembershipinferenceattacks,rahimian2020samplingattacksamplificationmembership,Hui_2021}, these subtle yet reliable cues are sufficient for both threshold-based and classifier-based membership inference attacks. The key theoretical insight is that membership distinguishability depends not on the absolute prediction value, but on its deviation from the patterns of nearby training samples. When a model is tested with an input close to a training example, its output usually reflects the confidence profile of its neighboring samples. Conversely, when tested with a non-member input outside this dense region of known samples, the model’s confidence landscape becomes more irregular, revealing membership weaknesses. 

Using this insight, we propose Neighborhood Blending, a post-treatment defense mechanism to mitigate such attacks. Our method establishes a defense by deliberately masking such discrepancies. For any queried sample, we aim to position the query within a neighborhood of actual training samples, at least from the attacker’s perspective. This ensures the prediction accurately reflects the stable and smooth confidence patterns characteristic of that local region. Additionally, we hide subtle behavioral cues and the `nuanced differences' that MIAs exploit, which might otherwise distinguish members from non-members.

By hiding the sample among its neighbors and releasing only smoothed, neighborhood-based behavior, our methodology effectively forces the attacker's observation to collapse into a general, class-consistent pattern that looks statistically identical, regardless of whether the specific data they queried was a genuine part of the original training set (member) or not (non-member). By doing this, we ensure the attacker gains no valuable information that would allow them to distinguish between the two.

\subsection{Methodology}

Let $\mathcal{D}_t=\{(\mathbf{x}_i,y_i)\}_{i=1}^n$ denote the private training dataset and $\mathcal{T}$ be the trained target classifier, mapping inputs $\mathbf{x}\in\mathcal{X}$ to confidence vectors $\mathbf{V}_\mathbf{x} = \mathcal{T}(\mathbf{x})$. Given a query sample $\mathbf{q}$, the confidence vector $\mathbf{V}_{\mathbf{q}}$ and the predicted class $\hat{y}_{\mathbf{q}}$ are determined as follows, where $C$ denotes the total number of classes:
\begin{align*}
\mathbf{V}_{\mathbf{q}} &= \mathcal{T}(\mathbf{q}) \in [0, 1]^C \\
\hat{y}_{\mathbf{q}} &= \underset{c \in \{1, \dots, C\}}{\operatorname{argmax}} (\mathbf{V}_{\mathbf{q}})_c
\end{align*}


The core objective of the defense is to construct a representative neighborhood $\mathcal{N}(\mathbf{q})$ by selecting the $m$ training samples that are semantically closest to the query $\mathbf{q}$. A standard baseline for selecting $m$ training samples is the naive top-$k$ selection. This method computes utility scores such as negative $L_p$ distances for all available candidates and deterministically retrieves the top-$m$ instances. While this greedy strategy maximizes the semantic relevance of the selected neighbors, its deterministic nature lacks privacy guarantees because the inclusion of specific training points is entirely predictable. Conversely, uniform random sampling satisfies privacy but severely degrades utility by ignoring feature similarity. 

To bridge this gap, we model the selection of $m$ neighbors as a single execution of the exponential mechanism. This formulation balances utility and privacy by assigning higher sampling probabilities to candidates closer to the query while introducing necessary stochasticity. Crucially, to preserve the semantic consistency of the prediction, we restrict the candidate set $\mathcal{S}$ to training samples that share the same predicted label as $\mathbf{q}$. 

To prioritize samples that are semantically similar to the query, we assign a utility score to each candidate $\mathbf{x}_i \in \mathcal{S}$. This score is defined as the negative $L_p$ distance to the query:
\begin{equation}
    u(\mathbf{x}_i, \mathbf{q}) = -\|\mathbf{x}_i - \mathbf{q}\|_p.
    \label{eq:utility}
\end{equation}
These scores quantify the relevance of each candidate and drive the exponential mechanism, ensuring that neighbors closer to the query have exponentially higher probabilities of selection, thereby balancing utility with privacy.
The detailed procedure is formalized in Algorithm \ref{alg:dp_defense_set}.

\begin{algorithm}[!htbp]
\caption{Differentially Private Neighborhood Blending}
\label{alg:dp_defense_set}
\begin{algorithmic}[1]
\State \textbf{Input:} Private training dataset $\mathcal{D}_t$, trained classifier $\mathcal{T}$, query $\mathbf{q}$, subset size $m$, privacy budget $\epsilon$.

\Statex \textbf{Initialization:}
\State Compute confidence vector: $\mathbf{V}_{\mathbf{q}} \leftarrow \mathcal{T}(\mathbf{q})$
\State Determine predicted label: $\hat{y}_{\mathbf{q}} \leftarrow \underset{c}{\operatorname{argmax}} (\mathbf{V}_{\mathbf{q}})_{c}$

\Statex \textbf{Label-Conditional Candidate Selection:}
\State Construct candidate set ensuring Zero Label Loss:
\[
\mathcal{S} \leftarrow \{ \mathbf{x}_i \in \mathcal{D}_t \mid \underset{c}{\operatorname{argmax}} (\mathcal{T}(\mathbf{x}_i))_c = \hat{y}_{\mathbf{q}} \}
\]

\Statex \textbf{Utility Scoring:}
\State Compute utility scores $u(\mathbf{x}_i, \mathbf{q})$ for all $\mathbf{x}_i \in \mathcal{S}$ according to Eq. (\ref{eq:utility}).

\Statex \textbf{Set-Based Exponential Sampling:}
\State Compute unnormalized logits $\phi_i$ for all $\mathbf{x}_i \in \mathcal{S}$ according to Eq. (\ref{eq:logits}).
\State Sample Gumbel noise $G_i \sim \text{Gumbel}(0, 1)$ for all $\mathbf{x}_i \in \mathcal{S}$.
\State Compute perturbed scores: $s_i \leftarrow \phi_i + G_i$.
\State Select neighbors $\mathcal{N}(\mathbf{q})$ corresponding to the top-$m$ scores:
\[
\mathcal{N}(\mathbf{q}) \leftarrow \{ \mathbf{x}_i \in \mathcal{S} \mid s_i \text{ is among the top-} m \text{ values} \}
\]

\Statex \textbf{Confidence Smoothing:}
\State Compute the smoothed vector:
\[
\mathbf{V}_{\mathbf{q}}^{\text{avg}} \leftarrow \frac{1}{m} \sum_{\mathbf{x}_j \in \mathcal{N}(\mathbf{q})} \mathcal{T}(\mathbf{x}_j)
\]

\State \textbf{Output:} Smoothed confidence vector $\mathbf{V}_{\mathbf{q}}^{\text{avg}}$.
\end{algorithmic}
\end{algorithm}
\subsection{Efficient Implementation via Gumbel-Top-k Trick}
\label{sec:implementation}

Directly sampling from the exponential mechanism distribution $P(\mathcal{N}) \propto \exp(\frac{\epsilon U(\mathcal{N})}{2\Delta u})$ requires computing the partition function over all possible subsets, which is combinatorially intractable. To circumvent this, we leverage the gumbel-top-$k$ trick. This stochastic optimization technique perturbs the log-probabilities (logits) with i.i.d. gumbel noise. Specifically, we compute the unnormalized logit for each candidate $\mathbf{x}_i$:
\begin{equation}
    \phi_i = \frac{\epsilon \cdot u(\mathbf{x}_i, \mathbf{q})}{2\Delta u}.
    \label{eq:logits}
\end{equation}
We then select the indices corresponding to the largest $m$ values of the perturbed scores $s_i = \phi_i + G_i$, where $G_i \sim \text{Gumbel}(0, 1)$. This vectorized operation is numerically stable and scales linearly with the size of the candidate set $|\mathcal{S}|$.

\subsection{Theoretical Analysis}

We now provide proofs for the privacy and utility guarantees of the proposed mechanism. Throughout the analysis, we assume the feature space is bounded such that $\|\mathbf{x}\|_p \le 1$ for all $\mathbf{x}$. We analyze the privacy under the \textit{substitution adjacency} model, where two candidate sets $\mathcal{S}$ and $\mathcal{S}'$ differ by the value of exactly one sample (i.e., replacing $\mathbf{x}_k$ with $\mathbf{x}'_k$).

\subsubsection{Equivalence Lemma}

A prerequisite for our privacy analysis is the mathematical equivalence between the gumbel-top-$k$ procedure and the exponential mechanism.

\begin{lemma}[Exact Sampling Equivalence~\cite{vieira2014gumbel, kool2019stochastic}]
\label{lemma:equivalence}
Let $u_i$ be the utility score for the $i$-th candidate and $G_i \sim \text{Gumbel}(0, 1)$ be i.i.d. noise. Selecting the top-$m$ indices based on perturbed scores $u_i + G_i$ is equivalent to sampling $m$ indices sequentially without replacement from a categorical distribution proportional to $\exp(u_i)$. Thus, this procedure realizes the exponential mechanism over the domain of index subsets.
\end{lemma}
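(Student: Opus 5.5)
We prove the claim by induction on $m$. The base case is the classical Gumbel-max fact. The inductive step uses the memorylessness of the Gumbel/exponential distributions, which lets us peel off the maximizer and apply the same fact to the remaining candidates.

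\textbf{Step 1 (Gumbel-max, $m=1$).} Fix $i$ and condition on $g=u_i+G_i$. The Gumbel$(0,1)$ CDF is $F(x)=\exp(-e^{-x})$, so
\[
\Pr[u_j+G_j<g \ \forall j\neq i \mid g]=\prod_{j\ne i}\exp(-e^{u_j-g})=\exp\Bigl(-e^{-g}\sum_{j\neq i}e^{u_j}\Bigr).
\]
The density of $g$ is $e^{u_i-g}\exp(-e^{u_i-g})$. Integrating, and substituting $t=e^{-g}$, gives
\[
\int_0^\infty e^{u_i}\exp\Bigl(-t\sum_j e^{u_j}\Bigr)\,dt=\frac{e^{u_i}}{\sum_j e^{u_j}}.
\]
Hence the argmax is distributed as $\mathrm{Cat}(\exp(u_i)/Z)$. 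Ties occur with probability zero.

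\textbf{Step 2 (max-stability and independence).} We also need the joint law of the maximum $M=\max_j(u_j+G_j)$ and its location. The same computation, without integrating over $g$, shows that the event $\{\arg\max=i\}$ is independent of $M$, and that $M\sim\mathrm{Gumbel}(\log Z)$.

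The key claim is this: conditioned on $\arg\max=i$, the perturbed scores $\{u_j+G_j\}_{j\ne i}$ are i.i.d.-shifted Gumbels $\mathrm{Gumbel}(u_j)$ conditioned only on being $<M$. The joint density factorizes as
\[
\prod_{j\ne i} f_{u_j}(s_j)\,\mathbf 1[s_j<M]\;\cdot\;(\text{law of }M).
\]
The relative order of the remaining scores does not depend on the truncation level. Therefore the ranking among $\{j\neq i\}$ has the same law as the ranking of untruncated independent $\mathrm{Gumbel}(u_j)$ variables.

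I expect this step to be the main obstacle. The cleanest route is to work with the exponential-race representation instead. Set $E_j=e^{-(u_j+G_j)}$. Then $E_j\sim\mathrm{Exp}(\text{rate }e^{u_j})$ independently, and the top-$m$ of $u_j+G_j$ are exactly the $m$ smallest $E_j$. Memorylessness of the exponential distribution then gives directly that, after the first arrival at time $E_{(1)}$, the residuals $E_j-E_{(1)}$ for $j\neq i$ are fresh independent exponentials with the same rates.

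\textbf{Step 3 (induction).} By the exponential race, the first selected index is $i_1$ with probability $e^{u_{i_1}}/\sum_j e^{u_j}$. Conditionally on this, the remaining race over $\{j\ne i_1\}$ is an independent race with the same rates. Applying Step 1 to it, the second index is $i_2$ with probability $e^{u_{i_2}}/\sum_{j\ne i_1}e^{u_j}$, and so on.

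The resulting ordered probability is
\[
\prod_{k=1}^m \frac{e^{u_{i_k}}}{\sum_{j\notin\{i_1,\dots,i_{k-1}\}}e^{u_j}},
\]
which is exactly sequential sampling without replacement from $\mathrm{Cat}(\propto e^{u_i})$. Summing over orderings gives the induced law on unordered subsets $\mathcal N(\mathbf q)$. Finally, we instantiate $u_i=\phi_i$ from Eq.~(\ref{eq:logits}), so each categorical draw is an exponential-mechanism selection with weights $\exp(\epsilon u(\mathbf x_i,\mathbf q)/2\Delta u)$. This yields the claimed realization of the exponential mechanism on index subsets, in its sequential (composed) form.
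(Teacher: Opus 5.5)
Your proof of the first sentence of the lemma (top-$m$ of $u_i+G_i$ equals sequential sampling without replacement) is correct. The paper gives no argument of its own here, only citations to Vieira and Kool et al. Your exponential-race route is the standard argument behind those references, so on this part you and the paper agree: $E_j=e^{-(u_j+G_j)}$ are independent exponentials with rates $e^{u_j}$, the top-$m$ scores are the $m$ smallest $E_j$, and memorylessness applies after each arrival. One small cleanup concerns Step 2. The assertion that the ranking of the truncated Gumbels does not depend on the truncation level is not a generic property of truncation; it is exactly the memorylessness fact. It is cleaner to drop that detour and let the exponential race carry the induction.

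The genuine gap is your last step. Summing the Plackett--Luce probabilities over orderings does \emph{not} produce a subset law of the form $\Pr[\mathcal I]\propto\exp\bigl(\sum_{i\in\mathcal I}u_i\bigr)$. That form is what ``the exponential mechanism over the domain of index subsets'' has to mean for the paper's Privacy Guarantee, whose proof uses it with a single partition function $Z_{\mathcal S}$ over $m$-subsets. A counterexample: take weights $e^{u}=(1,1,2)$ and $m=2$. Sequential sampling gives $\Pr[\{1,2\}]=\tfrac14\cdot\tfrac13+\tfrac14\cdot\tfrac13=\tfrac16$, while the subset-level exponential mechanism gives $\tfrac{1}{1+2+2}=\tfrac15$. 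So the lemma's ``Thus'' does not follow from what you proved. Your closing sentence holds only under your own reinterpretation as $m$ adaptively composed single-item exponential mechanisms. That reinterpretation is the correct conclusion, but you should state plainly that it is weaker: it supports a privacy loss that accumulates over the $m$ rounds, not a single-shot $\epsilon$. The difference is not cosmetic. Use the paper's logits $\phi_i=\epsilon u_i/4$ with $|\mathcal S|=4$ and $m=3$, put three candidates at utility $-2$, and substitute the fourth from utility $-2$ to $0$. The probability of the subset excluding the fourth then changes by the factor $(1+c)(2+c)(3+c)/24$ with $c=e^{\epsilon/2}$, and this exceeds $e^{\epsilon}=c^{2}$ for every $\epsilon>5.8$.
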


\subsubsection{Privacy Guarantee}

We analyze the privacy guarantee by formulating the neighbor selection as a mechanism $\mathcal{M}$ that outputs a subset of indices $\mathcal{I} \subset \{1, \dots, |\mathcal{S}|\}$ of size $m$.

\begin{theorem}[Privacy Guarantee]
The mechanism $\mathcal{M}$ that selects a subset of indices $\mathcal{I}$ with probability proportional to $\exp\left(\frac{\epsilon \cdot U(\mathcal{I}, \mathcal{S})}{2\Delta u}\right)$ satisfies $\epsilon$-differential privacy, where the set utility is defined as $U(\mathcal{I}, \mathcal{S}) = \sum_{i \in \mathcal{I}} u(\mathbf{x}_i, \mathbf{q})$.
\end{theorem}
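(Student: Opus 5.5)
We will argue directly from the definition of the exponential mechanism, treating the output space as the collection of size-$m$ index subsets $\mathcal{I}$. The plan has three steps: bound the sensitivity of the set utility $U$, bound the ratio of unnormalized weights, and bound the ratio of normalizing constants. By Lemma~\ref{lemma:equivalence}, the Gumbel-top-$m$ procedure of Algorithm~\ref{alg:dp_defense_set} realizes this distribution, so the guarantee will transfer to the implemented mechanism.

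\emph{Step 1: sensitivity.} Under substitution adjacency, $\mathcal{S}$ and $\mathcal{S}'$ differ only in the $k$-th sample. For a fixed $\mathcal{I}$, the utilities $U(\mathcal{I},\mathcal{S})$ and $U(\mathcal{I},\mathcal{S}')$ differ only through the term indexed by $k$, and only if $k\in\mathcal{I}$. Hence
\[
|U(\mathcal{I},\mathcal{S})-U(\mathcal{I},\mathcal{S}')| \le |u(\mathbf{x}_k,\mathbf{q})-u(\mathbf{x}'_k,\mathbf{q})| \le \|\mathbf{x}_k-\mathbf{x}'_k\|_p \le 2,
\]
using the reverse triangle inequality and the bound $\|\mathbf{x}\|_p\le 1$. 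We take $\Delta u$ to be this bound, i.e.\ $\Delta u = 2$, or more generally any upper bound on the single-sample change. The key observation is that only one summand changes, so the sensitivity of $U$ equals that of $u$ and does not scale with $m$.

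\emph{Step 2: the ratio of probabilities.} Write
\[
\Pr[\mathcal{M}(\mathcal{S})=\mathcal{I}] = \frac{\exp\!\big(\epsilon U(\mathcal{I},\mathcal{S})/2\Delta u\big)}{Z(\mathcal{S})},
\qquad
Z(\mathcal{S})=\sum_{\mathcal{J}}\exp\!\big(\epsilon U(\mathcal{J},\mathcal{S})/2\Delta u\big).
\]
The ratio of numerators for $\mathcal{S}$ versus $\mathcal{S}'$ is at most $e^{\epsilon/2}$ by Step~1. Applying the same bound termwise inside the sum gives $Z(\mathcal{S}')\le e^{\epsilon/2}Z(\mathcal{S})$. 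Multiplying the two bounds yields a total factor of $e^{\epsilon}$. The symmetric bound follows by swapping the roles of $\mathcal{S}$ and $\mathcal{S}'$.

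\emph{Main obstacle.} The delicate point is that the candidate set $\mathcal{S}$ is itself data-dependent, since it is filtered by the predicted label. A substitution in $\mathcal{D}_t$ could therefore change $|\mathcal{S}|$, and with it the output domain, rather than merely replace one element. The theorem is stated with adjacency taken at the level of candidate sets $\mathcal{S},\mathcal{S}'$ that differ by substitution of one entry, so the index domain is fixed. I would state this assumption explicitly and restrict the proof to that setting. I would also note that the final averaging step in Algorithm~\ref{alg:dp_defense_set} is post-processing of the selected indices together with the model outputs. Its privacy is covered by this theorem only for the selection step, so I would not claim more than that.
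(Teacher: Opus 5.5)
Your proof is correct and follows essentially the same route as the paper. You show that only the $k$-th summand can change, so $|U(\mathcal{I},\mathcal{S})-U(\mathcal{I},\mathcal{S}')|\le 2$, and then bound both the numerator ratio and (termwise) the partition-function ratio by $e^{\epsilon/2}$.

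One caution on your aside about transferring the guarantee to Algorithm~\ref{alg:dp_defense_set}; the paper's proof makes the same appeal. Lemma~\ref{lemma:equivalence} does not actually justify that transfer. The unordered Gumbel-top-$m$ set follows the sequential without-replacement law, and its set marginal is in general not proportional to $\exp\bigl(\sum_{i\in\mathcal{I}}\phi_i\bigr)$: with $(e^{\phi_1},e^{\phi_2},e^{\phi_3})=(2,1,1)$ and $m=2$, it gives $\Pr[\{1,2\}]=5/12$ rather than $2/5$. The implemented sampler would therefore need its own composition-style accounting over the $m$ draws. This does not affect the theorem as stated, which concerns only the idealized subset mechanism.
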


\begin{proof}
The proof relies on the sensitivity analysis of the utility function with respect to the substitution of one candidate.

\noindent \textbf{Sensitivity Analysis:}
Let $\Omega$ be the set of all subsets of indices of size $m$. Consider two adjacent candidate sets $\mathcal{S}, \mathcal{S}'$ differing only at index $k$ (where $\mathbf{x}_k \in \mathcal{S}$ is replaced by $\mathbf{x}'_k \in \mathcal{S}'$). The sensitivity $\Delta U$ is defined as $\max_{\mathcal{S}, \mathcal{S}'} |U(\mathcal{I}, \mathcal{S}) - U(\mathcal{I}, \mathcal{S}')|$.
For any index subset $\mathcal{I} \in \Omega$:
\begin{itemize}
    \item Case 1: $k \notin \mathcal{I}$. The selected indices do not include the modified candidate. Thus, $U(\mathcal{I}, \mathcal{S}) - U(\mathcal{I}, \mathcal{S}') = 0$.
    \item Case 2: $k \in \mathcal{I}$. The subset includes the modified candidate. By the triangle inequality and the bounded feature space assumption, the change is:
    \[
    |u(\mathbf{x}_k, \mathbf{q}) - u(\mathbf{x}'_k, \mathbf{q})| \le \|\mathbf{x}_k - \mathbf{x}'_k\|_p \le 2.
    \]
\end{itemize}
Consequently, the sensitivity is bounded by $\Delta U = 2$.

\noindent \textbf{Privacy Loss Analysis:}
Based on Lemma \ref{lemma:equivalence}, our mechanism samples according to the exponential mechanism. For any output index subset $\mathcal{I} \in \Omega$, the probability ratio is bounded by the ratio of the unnormalized probabilities and the partition functions $Z_{\mathcal{S}} = \sum_{\mathcal{J}} \exp(\frac{\epsilon U(\mathcal{J}, \mathcal{S})}{2\Delta U})$:
\begin{equation}
{\small 
\begin{aligned}
    \frac{Pr[\mathcal{M}(\mathcal{S}) = \mathcal{I}]}{Pr[\mathcal{M}(\mathcal{S}') = \mathcal{I}]}
    &= \frac{\exp\left(\frac{\epsilon U(\mathcal{I}, \mathcal{S})}{2\Delta U}\right)}{\exp\left(\frac{\epsilon U(\mathcal{I}, \mathcal{S}')}{2\Delta U}\right)} \cdot \frac{Z_{\mathcal{S}'}}{Z_{\mathcal{S}}} \\
    &\le \exp\left(\frac{\epsilon |U(\mathcal{I}, \mathcal{S}) - U(\mathcal{I}, \mathcal{S}')|}{2\Delta U}\right) \cdot \exp\left(\frac{\epsilon \Delta U}{2\Delta U}\right) \\
    &\le \exp\left(\frac{\epsilon}{2}\right) \cdot \exp\left(\frac{\epsilon}{2}\right) = \exp(\epsilon).
\end{aligned}
}
\end{equation}

This concludes that $\mathcal{M}$ satisfies $\epsilon$-DP regarding the selection of neighbor indices.
\end{proof}

\subsubsection{Utility Guarantee}

Finally, we characterize the utility of the selected subset compared to the optimal subset.

\begin{theorem}[Utility Guarantee]
Let $\mathcal{I}_{OPT}$ be the subset with the maximum utility $OPT$. For any $t > 0$, the probability that the selected subset $\mathcal{I}$ has utility $U(\mathcal{I}) \le OPT - \frac{2\Delta u}{\epsilon}(\ln|\Omega| + t)$ is bounded by:
\begin{equation}
    Pr\left[ U(\mathcal{I}) \le OPT - \frac{2\Delta u}{\epsilon}(\ln|\Omega| + t) \right] \le e^{-t},
\end{equation}
where $|\Omega| = \binom{|\mathcal{S}|}{m}$ denotes the size of the output space (total number of possible subsets).
\end{theorem}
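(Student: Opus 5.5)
This is the standard utility bound for the exponential mechanism, applied to the output space $\Omega$ of size-$m$ index subsets. By Lemma~\ref{lemma:equivalence}, the selected subset $\mathcal{I}$ is distributed as
\[
\Pr[\mathcal{I}] = \frac{\exp\left(\frac{\epsilon U(\mathcal{I})}{2\Delta u}\right)}{Z},
\qquad
Z = \sum_{\mathcal{J}\in\Omega} \exp\left(\frac{\epsilon U(\mathcal{J})}{2\Delta u}\right).
\]
The argument needs an upper bound on the probability mass of the ``bad'' subsets and a lower bound on $Z$.

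\textbf{Step 1: lower-bound the normalizer.} Keep only the term $\mathcal{J}=\mathcal{I}_{OPT}$. This gives $Z \ge \exp\left(\frac{\epsilon\, OPT}{2\Delta u}\right)$, since every term in the sum is positive.

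\textbf{Step 2: upper-bound the bad mass.} Let $B=\{\mathcal{I}\in\Omega : U(\mathcal{I}) \le OPT - \frac{2\Delta u}{\epsilon}(\ln|\Omega|+t)\}$. Each $\mathcal{I}\in B$ has unnormalized weight at most
\[
\exp\left(\frac{\epsilon\, OPT}{2\Delta u} - \ln|\Omega| - t\right).
\]
Since $|B|\le|\Omega|$, the total unnormalized mass of $B$ is at most
\[
|\Omega|\cdot \exp\left(\frac{\epsilon\, OPT}{2\Delta u}\right)\frac{e^{-t}}{|\Omega|} = \exp\left(\frac{\epsilon\, OPT}{2\Delta u}\right)e^{-t}.
\]

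\textbf{Step 3: combine.} Dividing the bound of Step 2 by the bound of Step 1 gives $\Pr[\mathcal{I}\in B]\le e^{-t}$, which is the claim. Finally, $|\Omega|=\binom{|\mathcal{S}|}{m}$, because the outputs are exactly the size-$m$ subsets of candidate indices.

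\textbf{Main obstacle: matching the scale in the exponent.} The counting argument itself is routine. The care needed is in checking that the sampling distribution has exactly the scale $\epsilon/(2\Delta u)$ in the exponent of the set utility $U$.

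Lemma~\ref{lemma:equivalence} applied to the logits $\phi_i$ of Eq.~(\ref{eq:logits}) gives sequential sampling without replacement with weights $\exp(\phi_i)$. For $m>1$, the resulting distribution over unordered sets is not literally proportional to $\exp\left(\sum_{i\in\mathcal{I}}\phi_i\right)$, because the sequential normalizers depend on the order of selection.

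I would handle this the way the statement does. The theorem is phrased for the mechanism that samples $\mathcal{I}$ with probability $\propto \exp\left(\frac{\epsilon U(\mathcal{I})}{2\Delta u}\right)$, as in the Privacy Guarantee theorem, so I take that distribution as given and the argument above applies to it verbatim. Only if the analysis had to cover the Gumbel procedure itself would I need more: I would then either apply the single-draw bound to each of the $m$ sequential draws, or appeal to the cited equivalence in its set-level form.
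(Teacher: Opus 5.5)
Your proof is correct and matches the paper's: the paper bounds $\Pr[U(\mathcal{I})\le c]$ by $|\mathcal{B}|\exp\left(\frac{\epsilon c}{2\Delta u}\right)/\exp\left(\frac{\epsilon\, OPT}{2\Delta u}\right)\le|\Omega|\exp\left(\frac{\epsilon(c-OPT)}{2\Delta u}\right)$ and then solves for $c$, which is your Steps 1--3 (normalizer lower-bounded by the $\mathcal{I}_{OPT}$ term, bad mass by count times maximum weight) with the threshold fixed up front instead of solved for. Your caveat is accurate and goes beyond the paper: for $m>1$ the Gumbel-top-$m$ output is Plackett--Luce rather than proportional to $\exp\left(\sum_{i\in\mathcal{I}}\phi_i\right)$ on unordered sets, so both you and the paper are analyzing the idealized set-level exponential mechanism named in the theorem, not the implemented sampler.
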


\begin{proof}
Let $\mathcal{B} = \{ \mathcal{I} \in \Omega \mid U(\mathcal{I}) \le c \}$ be the set of index subsets with utility below a threshold $c$. Following the standard utility analysis of the exponential mechanism \cite{dwork2014algorithmic}, the probability of selecting a subset from $\mathcal{B}$ is bounded by:
\begin{equation}\label{eq:bound}
    Pr[U(\mathcal{I}) \le c] \le \frac{|\mathcal{B}| \exp(\frac{\epsilon c}{2\Delta u})}{\exp(\frac{\epsilon OPT}{2\Delta u})} \le |\Omega| \exp\left( \frac{\epsilon (c - OPT)}{2\Delta u} \right).
\end{equation}
We wish to bound this probability by $e^{-t}$. Setting the RHS of (\ref{eq:bound}) to $e^{-t}$ and solving for $c$, we have
\begin{align*}
    |\Omega| \exp\left( \frac{\epsilon (c - OPT)}{2\Delta u} \right) &= e^{-t} \\
    \frac{\epsilon (c - OPT)}{2\Delta u} &= -t - \ln|\Omega| \\
    c &= OPT - \frac{2\Delta u}{\epsilon} (\ln|\Omega| + t).
\end{align*}
Substituting this value of $c$ back, we obtain the final bound:
\begin{equation}
    Pr\left[ U(\mathcal{I}) \le OPT - \frac{2\Delta u}{\epsilon} (\ln|\Omega| + t) \right] \le e^{-t}.
\end{equation}
This demonstrates that with high probability, the utility of the selected neighbor indices is close to that of the optimal $m$-nearest neighbors.
\end{proof}

\section{Evaluation}
\label{sec:evaluation}

In this section, we begin with a detailed description of the datasets used to rigorously test our approach, followed by a comprehensive description of the target model architectures and their characteristics. We then specify the configuration and procedures of our experimental setup. The core analysis presents membership inference attack results, comparing undefended models with those using our proposed defense to show its effectiveness in mitigating membership inference attacks (MIAs). Furthermore, we provide an in-depth utility analysis, demonstrating the impact of integrating our defense on the model's performance on its primary task. Finally, we benchmark our defense mechanism against established and state-of-the-art existing privacy-preserving techniques like MemGuard \cite{Jia2019memgaurd} and Differential Privacy Stochastic Gradient Descent (DP-SGD)~\cite{Abadi_2016} to contextualize the advantages and trade-offs of our approach within the broader landscape of privacy-preserving machine learning.

\subsection{Datasets}
\label{sec:eval_dataset}

\subsubsection{Nursery} 

The Nursery\footnote{\href{https://archive.ics.uci.edu/dataset/76/nursery}{https://archive.ics.uci.edu/dataset/76/nursery}} is a benchmark dataset in machine learning that contains 12,960 instances, each described by 8 categorical attributes. The target variable has five distinct classes, representing different outcomes for nursery school applications. All attributes are categorical, and the dataset is often used for classification tasks involving categorical data.

\subsubsection{Iris}

The Iris \footnote{\href{https://archive.ics.uci.edu/dataset/53/iris}{https://archive.ics.uci.edu/dataset/53/iris}} dataset contains 150 instances of iris flowers, described by four numeric attributes (all measured in centimeters). The target variable has three classes corresponding to the species. The dataset is balanced, with 50 instances per class, and is commonly used to evaluate classification algorithms.

\subsubsection{Adult (Census Income)}

The original Adult dataset\footnote{\href{https://archive.ics.uci.edu/dataset/2/adult}{https://archive.ics.uci.edu/dataset/2/adult}} comprises 48842 records with 14 attributes, including age, gender, and education. To develop an optimal machine learning model, the data is cleaned by addressing missing values, removing irrelevant columns, and eliminating duplicates. Numeric features are capped for outliers, log-transformed, and scaled, while categorical features are encoded. The binary classification task involves predicting whether a person earns over \$50K annually, encoded as 0/1.

\subsubsection{Purchase}

The purchase dataset employed in this research originates from Kaggle’s Acquire Valued Shoppers Challenge \footnote{\hyperlink{https://www.kaggle.com/c/acquire-valued-shoppers-challenge}{https://www.kaggle.com/c/acquire-valued-shoppers-challenge}} and contains the annual shopping records of thousands of individuals. We utilize a simplified, preprocessed version of this dataset provided by Shokri et al. \cite{Shokri2017}. This dataset comprises 197,324 samples, each described by 600 binary features. Each feature indicates whether a specific product was purchased by the individual. These samples are grouped into several different numbers of classes: 2, 10, 20, 50, and 100, which represent distinct purchase styles. The goal of the classification task is to predict an individual's purchase style based on their 600-dimensional binary feature vector.

\subsubsection{Location}

The dataset, derived from the Foursquare dataset \footnote{\hyperlink{https://sites.google.com/site/yangdingqi/home/foursquare-dataset}{https://sites.google.com/site/yangdingqi/home/foursquare-dataset}}, which contains location “check-in” records of several thousand individuals. The Location dataset, a simplified and preprocessed version by Shokri et al. \cite{Shokri2017}, contains 5,010 data samples, each described by 446 binary features. Similar to the purchase dataset, the task is a classification problem. The records are structured into 30 distinct clusters, making it a 30-class classification task. The objective is to predict the user's geo-social type based on their visit record.

\subsubsection{Texas}

The Texas dataset draws from the discharge data found in the Texas Hospital Inpatient Discharge Public Use Data File, provided by the Texas Department of State Health Services\footnote{\hyperlink{https://www.dshs.texas.gov/THCIC/Hospitals/Download.shtm}{https://www.dshs.texas.gov/THCIC/Hospitals/Download.shtm}}. It includes 925,128 patient records from 441 hospitals across Texas, covering the year 2006. Each record includes the patient’s demographic details. The machine learning task involves a 100-class classification problem predicting one of 100 surgical procedures from a patient’s health record.

\subsubsection{CIFAR}

This is a key benchmark dataset used to evaluate image recognition algorithms~\cite{krizhevsky2009learning}. It features 32×32 color images from 10 different classes, with 6000 images per class. We are loading it from a publicly available dataset collection called TensorFlow Datasets~\cite{TFDS}.

\subsection{Target Classifiers}

To evaluate the effectiveness and robustness of our proposed defense against membership inference attacks, we test it across various target models. We first consider classical machine-learning models and select three widely used classifiers: i) a Logistic Regression (LR) model configured with the lbfgs solver and a maximum of 10,000 iterations, ii) a Random Forest (RF) classifier with 100 decision trees, and iii) a Support Vector Classifier (SVC) using an  Radial Basis Function  (RBF) kernel. All classical models are taken from the Scikit-learn library \cite {scikit-learn}, a standard toolkit for machine-learning experimentation.

To complement these traditional models, we also evaluate our defense on a modern, high-capacity target: a feed-forward neural network trained using stochastic gradient descent (SGD). This network includes four fully connected hidden layers of sizes 1024, 512, 256, and 128, followed by a SoftMax output layer. We use the ReLU activation function throughout and train the model with a learning rate of 0.01.

For the CIFAR-10 image dataset, we evaluate our defense on a modern Convolutional Neural Network (CNN) architecture optimized for image recognition. The target model consists of two convolutional blocks: the first block contains two layers with 32 filters, and the second contains two layers with 64 filters. Each block is followed by a max-pooling layer and a dropout layer (rate of 0.25). The extracted features are processed through a fully connected dense layer of 256 units before reaching a SoftMax output layer. We utilize ReLU activation functions throughout the hidden layers and train the network using the Adam optimizer with a learning rate of 0.001.

\subsection{Evaluation Metrics}
\label{sec:eval_metrics}

The objective of the proposed defense mechanism is to mitigate the risk of membership inference attacks while maintaining the utility of the underlying machine learning model. Considering that the target models generate both a predicted label and an associated confidence vector, our defense mechanism is designed to preserve the predicted label: following smoothing, the class that originally predicted the highest probability continues to be the top prediction. Our approach guarantees zero label loss and ensures that the classifier's core functionality is preserved. 

\noindent We evaluate the proposed defense from two complementary perspectives, i.e., membership inference risk and model utility preservation. 

\subsubsection{Membership Inference Risk}

To assess the effectiveness of the proposed defense, we evaluate the model's vulnerability to various membership inference attacks. We measure attack accuracy both with and without the defense in place. This direct comparison allows us to quantify the proposed defense's success in reducing the membership information leaked via the model's confidence scores.

We explore two widely studied types of membership inference attacks: shadow-model attacks and metric-based attacks. These are the main approaches found in the research literature that offer a comprehensive evaluation of privacy leakage.

\textit{Shadow-model attacks}, initially introduced by Shokri et al.~\cite{Shokri2017}, involve training an attack classifier using machine learning to distinguish between members and non-members based on the output of the target model. The attacker constructs multiple shadow models using datasets from the same distribution as the target's training data. For each shadow model, the attacker gathers prediction vectors for both shadow-training samples (considered members) and shadow-test samples (considered non-members). These labeled predictions form an attack dataset used to train a binary classifier, known as an attack model. Once developed, the attack model takes a prediction vector from the target model and predicts the probability that the record was part of the target’s training set.

\textit{Metric-based attacks} infer membership by applying simple statistical tests to the target model’s prediction vector. Unlike shadow-model attacks, these attacks do not rely on training an attack classifier, making them lightweight and computationally inexpensive. We include the following representative metric-based attacks:

\begin{itemize}[noitemsep, topsep=0pt, partopsep=0pt, parsep=0pt]
    \item \textbf{Prediction confidence attack:} Yeom et al.~\cite{yeom2018} and Song et al.~\cite{Song2019} demonstrated that training samples generally attain higher confidence scores for their correct labels. An input is classified as a member if its predicted probability for the true label surpasses a predetermined threshold. Song et al.~\cite{Song2020SystematicEO} enhanced the attack methodology by incorporating class-dependent thresholds, where distinct threshold values are established for different class labels.
    
    \item \textbf{Prediction Entropy Attack:} Shokri et al.~\cite{Shokri2017} observe that training samples tend to have lower prediction entropy. Membership is inferred if the prediction entropy is below a threshold.
    
    \item \textbf{Modified Entropy Attack:} A refined entropy-based metric, proposed by Song et al~\cite{Song2020SystematicEO}, incorporates ground-truth awareness by enforcing monotonicity with respect to both correct and incorrect label probabilities.
\end{itemize}

Together, these attacks encompass a spectrum of potent and diverse adversarial strategies, spanning from high-capacity adversaries that rely on learned decision boundaries (shadow-model attacks) to lightweight adversaries that exploit statistical disparities in prediction vectors (metric-based attacks).

\subsubsection{Utility Preservation}

To evaluate the defense's impact on model outputs, we quantify the resulting distortion in the confidence vector by calculating the distance between the original and smoothed confidence vectors. This distortion is measured using two distinct metrics: 

\begin{itemize}[noitemsep, topsep=0pt, partopsep=0pt, parsep=0pt]
    \item \textbf{Predicted-class probability distortion}: This is the mean absolute difference between the original probability and the smoothed probability for the predicted class.
    \item \textbf{Confidence vector distortion:} This measures the mean Euclidean distance between the original predicted confidence vector and the smoothed confidence vector.
\end{itemize}
  
These two metrics collectively offer a comprehensive view of the defense's effectiveness, enabling us to assess both aspects of the privacy–utility trade-off.

\begin{table}[htbp]
\centering
\caption{Shadow-model based membership inference attack accuracy (with and without defense) across datasets and classical ML models.}
\label{tab:shadow_attack_results}
\small
\setlength{\tabcolsep}{4pt}
\renewcommand{\arraystretch}{0.9} 
\setlength{\aboverulesep}{1pt}
\setlength{\belowrulesep}{1pt}
\begin{tabularx}{\columnwidth}{l X c c}
\toprule
\textbf{Dataset} & \textbf{Model} & \textbf{Acc. (No Def.)} & \textbf{Acc. (With Def.)} \\
\midrule
Nursery & RF  & 0.6672 & 0.5372 \\
        & LR  & 0.4989 & 0.5020 \\
        & SVC & 0.4998 & 0.5014 \\
\midrule
Iris    & RF  & 0.7222 & 0.5556 \\
        & LR  & 0.6111 & 0.5556 \\
        & SVC & 0.4167 & 0.5000 \\
\midrule
Adult   & RF  & 0.5384 & 0.5468 \\
        & LR  & 0.4995 & 0.4994 \\
        & SVC & 0.5017 & 0.4993 \\
\midrule
Purchase-10  & RF  & 0.9679 & 0.4951 \\
             & LR  & 0.5156 & 0.5207 \\
             & SVC & 0.7714 & 0.5109 \\
\midrule
Purchase-20  & RF  & 0.9589 & 0.4719 \\
             & LR  & 0.4774 & 0.5078 \\
             & SVC & 0.9217 & 0.5101 \\
\midrule
Purchase-50  & RF  & 0.9677 & 0.5058 \\
             & LR  & 0.5165 & 0.5055 \\
             & SVC & 0.9125 & 0.4979 \\
\midrule
Purchase-100 & RF  & 0.9791 & 0.5126 \\
             & LR  & 0.5532 & 0.5597 \\
             & SVC & 0.8902 & 0.5117 \\
\midrule
Location-30     & RF  & 0.8952 & 0.4920 \\
             & LR  & 0.6461 & 0.5106 \\
             & SVC & 0.8643 & 0.5089 \\
\midrule 
Texas-100    & RF  & 0.8330 & 0.4940 \\
             & LR  & 0.5020 & 0.4940 \\
             & SVC & 0.5320 & 0.4950 \\
\bottomrule
\end{tabularx}
\end{table}

\begin{table}[htbp]
\centering
\caption{Shadow-model based membership inference attack accuracy (with and without proposed defense) against neural network models across datasets.}
\label{tab:shadow_attack_results_nn}
\small
\setlength{\tabcolsep}{4pt} 
\begin{tabularx}{\columnwidth}{@{} l X c c @{}}
\toprule
\textbf{Dataset} & & \textbf{Acc. (No Def.)} & \textbf{Acc. (With Def.)} \\
\midrule
Location-30  & & 0.766 & 0.497  \\
\addlinespace
Purchase-10 & & 0.577 & 0.504 \\
\addlinespace
Purchase-100   & & 0.571 & 0.495 \\
\bottomrule
\end{tabularx}
\end{table}

\begin{table}[htbp]
\centering
\caption{Shadow-model based membership inference attack accuracy (with and without proposed defense) against different models on CIFAR-10 dataset.}
\label{tab:shadow_attack_results_cifar}
\small
\setlength{\tabcolsep}{4pt} 
\begin{tabularx}{\columnwidth}{@{} l X c c @{}}
\toprule
\textbf{Model} & & \textbf{Acc. (No Def.)} & \textbf{Acc. (With Def.)} \\
\midrule
RF  & & 0.985 & 0.500  \\
\addlinespace
LR & & 0.539 & 0.500 \\
\addlinespace
SVC   & & 0.918 & 0.495 \\
\addlinespace
CNN   & & 0.707 & 0.618 \\
\bottomrule
\end{tabularx}
\end{table}


\subsubsection{Parameter Selection and Trade-offs} 

The neighborhood size, governed by the parameter $m$, functions as a balancing factor between membership privacy and model utility. A smaller neighborhood produces a smoothed output that closely resembles the original query, potentially preserving membership signals. Conversely, an excessively large neighborhood can substantially distort the confidence vector. Based on preliminary tuning across various datasets, we set $m=5$ as the default for densely populated datasets to ensure robust protection with minimal distortion. For sparse datasets, we recommend setting $m=3$ to maintain localized relevance. 

\subsection{Evaluation Results and Analysis}
\label{sec:eval}

We evaluate the performance of our proposed defense mechanism based on the aforementioned metrics: attack inference accuracy and utility loss. The empirical findings clearly demonstrate that the proposed confidence-smoothing defense significantly reduces the risk of membership inference across a variety of datasets, models, and attack modalities. It accomplishes this while inducing only controlled and often minimal distortions in predicted confidence vectors. The defense is particularly effective in high inference risk scenarios, such as with Random Forest and SVC models on high-dimensional datasets like \textit{Purchase-100} and \textit{Location-30}, where it consistently diminishes attack success rates to levels approaching or below the random guessing baseline. Conversely, it adopts a more conservative approach in low inference risk environments, where attack accuracy is already proximate to a random guess in the absence of any defense. 

\subsubsection{Defense Robustness Evaluation}

For the attack inference accuracies, we begin by calculating the attack accuracy of shadow-model based and metric-based attacks against various ML models across various datasets, both with and without our defense mechanism, see Tables~\ref{tab:shadow_attack_results},~\ref{tab:shadow_attack_results_nn},~\ref{tab:shadow_attack_results_cifar} \&~\ref{tab:metric_attack_results} for results. 

\noindent\textbf{Shadow-model based attacks:} The proposed defense produces the most significant absolute reductions in attack accuracy precisely where the undefended models are most susceptible to such attacks. For example, on the high-dimensional \textit{Purchase-10}, \textit{Purchase-20}, and \textit{Purchase-50} datasets, the undefended Random Forest models display exceptionally high average attack accuracies in the range of $0.96$--$0.97$. When employing our defense mechanism, these values decrease to approximately $0.47$--$0.51$, corresponding to absolute reductions of approximately $0.46$--$0.49$ in attack accuracy. Likewise, on the \textit{Location} dataset, the RF shadow attack accuracy declines from $0.8952$ to $0.4920$, representing a reduction of roughly $0.40$. A similar trend is observed for Support Vector Classifier (SVC) on these high-dimensional datasets. For instance, on it \textit{Purchase-20} and it \textit{Purchase-50}, the shadow model attack accuracies for SVC decline from $0.9217$ and $0.9125$ to approximately $0.51$ and $0.50$, respectively. Conversely, Logistic Regression models are often already close to random guess without any defense on datasets such as \textit{Nursery}, \textit{Adult}, and some \textit{Purchase} variants exhibit only slight fluctuations after the implementation of defenses, with variations of merely a few percentage points. This behavior is advantageous: when the baseline model leaks minimal membership information, the defense preserves the attack accuracy at a nearly constant level, thereby avoiding any unintended side effects. For low-dimensional datasets such as {Iris} and {Nursery}, the defense continues to offer significant improvements in specific configurations. For instance, the random forest shadow accuracy on {Iris} decreases from $0.7222$ to $0.5556$, and on {Nursery} from $0.6672$ to $0.5372$. These findings suggest that the proposed defense can nearly eliminate the strong membership signal leaked by the models, effectively reducing the shadow attacker's success rate to that of random guessing. Similar trend is observed in the attack accuracy drop on CIFAR-10 dataset (see Table~\ref{tab:shadow_attack_results_cifar}). 

After evaluating the performance of our defense mechanism on classical machine learning target model we move to deep neural networks trained using stochastic gradient descent (SGD). We simulate shadow model attacks on the SGD target model trained using \textit{Location-30,Purchase-10 \& Purchase-100} datasets. We compute attack accuracies with and without the proposed defense mechanism. Table~\ref{tab:shadow_attack_results_nn} shows that the defense mechanism effectively neutralizes membership risks for neural networks, successfully driving attack accuracies down from as high as $0.76$ to near-random guess levels ($\approx 0.50$). We also evaluate the performance of our defense mechanism on high-dimensional, non-tabular data i.e \textit{CIFAR-10} image dataset. We simulate shadow-model attacks against different models trained on this dataset. Table~\ref{tab:shadow_attack_results_cifar} illustrates the attack accuracies reduced from higher than $0.9$ to $\approx 0.50$ in the case of RF and SVC models.  These models include a high-dimensional Convolutional Neural Network (CNN) with an undefended attack accuracy of $0.707$, which is effectively reduced to $\approx 0.60$.

\begin{table}[htbp]
\centering
\caption{Metric-based membership inference attack accuracy (with and without defense) across datasets and classical ML models.  M: Model; PC: Prediction Confidence Attack; PCD: Prediction Confidence Attack with Defense; PE: Prediction Entropy Attack; PED: Prediction Entropy Attack with Defense; ME: Modified Entropy Attack; MD: Modified Entropy Attack with Defense.}
\label{tab:metric_attack_results}

\small

\begin{tabularx}{\columnwidth}{@{} l l *{6}{X} @{}}
\toprule
\textbf{Dataset} & \textbf{M} & \textbf{PC} & \textbf{PCD} & \textbf{PE} & \textbf{PED} & \textbf{ME} & \textbf{MD} \\
\midrule
Nursery & RF  & 0.55 & 0.50 & 0.76 & 0.49 & 0.76 & 0.52 \\
        & LR  & 0.48 & 0.50 & 0.70 & 0.59 & 0.70 & 0.64 \\
        & SVC & 0.49 & 0.49 & 0.48 & 0.41 & 0.48 & 0.41 \\
\midrule
Iris    & RF  & 0.58 & 0.56 & 0.72 & 0.56 & 0.72 & 0.56 \\
        & LR  & 0.50 & 0.56 & 0.69 & 0.64 & 0.69 & 0.67 \\
        & SVC & 0.55 & 0.57 & 0.61 & 0.50 & 0.58 & 0.56 \\
\midrule
Adult   & RF  & 0.53 & 0.51 & 0.57 & 0.50 & 0.59 & 0.52 \\
        & LR  & 0.50 & 0.51 & 0.51 & 0.50 & 0.51 & 0.51 \\
        & SVC & 0.50 & 0.51 & 0.51 & 0.50 & 0.51 & 0.51 \\
\midrule
Purchase-10    & RF  & 0.51 & 0.50 & 0.96 & 0.53 & 0.96 & 0.54 \\
        & LR  & 0.51 & 0.50 & 0.56 & 0.52 & 0.57 & 0.52 \\
        & SVC & 0.50 & 0.51 & 0.80 & 0.52 & 0.80 & 0.52 \\
\midrule
Purchase-20    & RF  & 0.51 & 0.53 & 0.97 & 0.51 & 0.98 & 0.54 \\
        & LR  & 0.50 & 0.51 & 0.56 & 0.51 & 0.60 & 0.52 \\
        & SVC & 0.49 & 0.52 & 0.81 & 0.52 & 0.81 & 0.53 \\
\midrule
Purchase-50    & RF  & 0.52 & 0.52 & 0.96 & 0.52 & 0.97 & 0.55 \\
        & LR  & 0.51 & 0.51 & 0.61 & 0.52 & 0.64 & 0.53 \\
        & SVC & 0.50 & 0.53 & 0.78 & 0.53 & 0.78 & 0.53 \\
\midrule
Purchase-100   & RF  & 0.62 & 0.50 & 0.89 & 0.52 & 0.90 & 0.53 \\
        & LR  & 0.50 & 0.50 & 0.52 & 0.50 & 0.54 & 0.52 \\
        & SVC & 0.51 & 0.51 & 0.79 & 0.54 & 0.80 & 0.51 \\
\midrule
Location-30 & RF  & 0.55 & 0.54 & 0.94 & 0.54 & 0.95 & 0.53 \\
        & LR  & 0.52 & 0.52 & 0.75 & 0.51 & 0.78 & 0.51 \\
        & SVC & 0.52 & 0.54 & 0.50 & 0.50 & 0.50 & 0.50 \\
\midrule
Texas-100   & RF  & 0.66 & 0.53 & 0.75 & 0.50 & 0.77 & 0.51 \\
        & LR  & 0.50 & 0.50 & 0.50 & 0.50 & 0.50 & 0.50 \\
        & SVC & 0.51 & 0.50 & 0.50 & 0.50 & 0.51 & 0.50 \\
\bottomrule
\end{tabularx}
\end{table}

\noindent\textbf{Metric-based attacks:} The results in Table~\ref{tab:metric_attack_results} illustrate that the defense mechanism does not target a specific attack type and provides robust mitigation against different types of metric-based attacks. It consistently diminishes statistical signatures that attackers depend upon. For Random Forest (RF) models on the \textit{Purchase-10}, \textit{Purchase-20}, \textit{Purchase-50}, and \textit{Location} datasets, the Prediction Entropy Attack (PE) and Modified Entropy Attack (ME) achieve high accuracy rates in the absence of defense, often ranging between $0.94$ and $0.98$. Upon the implementation of the defense, these accuracy levels decrease to approximately $0.51$ to $0.55$, nearing the level of random guessing. For example, on \textit{Purchase-20} with RF, the accuracies for PE and ME decline from $0.97$ and $0.98$ to $0.51$ and $0.54$, respectively. Similarly, on \textit{Location}, these accuracies decrease from $0.94$ and $0.95$ to $0.54$ and $0.53$. In cases where attack accuracies are close to the random baseline, even without any defense, and the defended variants remain similarly close (e.g., accuracies around $0.50$–$0.52$). 

Overall, these findings indicate that our defense mechanism provides extensive protection against both learned (shadow-model) and handcrafted (metric-based) attack techniques, ensuring robustness irrespective of whether the attacker employs a different attack model or merely relies on confidence metrics.

\begin{table}[htbp]
\centering
\caption{Predicted-class confidence difference (PCD) and confidence vector difference (CVD) across datasets and classical ML models.}
\label{tab:confidence_differences}
\small

\begin{tabularx}{\columnwidth}{@{} X X X X @{}}
\toprule
\textbf{Dataset} & \textbf{Model} & \textbf{PCD} & \textbf{CVD} \\
\midrule
Nursery & RF  & 0.075 & 0.105 \\
        & LR  & 0.047 & 0.067 \\
        & SVC & 0.029 & 0.041 \\
\midrule
Iris    & RF  & 0.061 & 0.083 \\
        & LR  & 0.040 & 0.056 \\
        & SVC & 0.047 & 0.067 \\
\midrule
Adult   & RF  & 0.080 & 0.113 \\
        & LR  & 0.051 & 0.073 \\
        & SVC & 0.039 & 0.055 \\


\midrule
Purchase-10  & RF  & 0.340 & 0.369 \\
             & LR  & 0.016 & 0.023 \\
             & SVC & 0.095 & 0.119 \\
\midrule
Purchase-20  & RF  & 0.326 & 0.344 \\
             & LR  & 0.014 & 0.020 \\
             & SVC & 0.121 & 0.145 \\
\midrule
Purchase-50  & RF  & 0.299 & 0.310 \\
             & LR  & 0.026 & 0.036 \\
             & SVC & 0.164 & 0.186 \\
\midrule
Purchase-100 & RF  & 0.249 & 0.256 \\
             & LR  & 0.030 & 0.041 \\
             & SVC & 0.098 & 0.122 \\
\midrule
Location-30     & RF  & 0.313 & 0.329 \\
             & LR  & 0.045 & 0.060 \\
             & SVC & 0.278 & 0.299 \\
\midrule
Texas-100    & RF  & 0.347 & 0.391 \\
             & LR  & 0.007 & 0.013 \\
             & SVC & 0.006 & 0.016 \\
\bottomrule
\end{tabularx}
\end{table}

\subsubsection{Utility Analysis}

We compute both predicted-class probability distortion (the absolute difference between the baseline and smoothed (defended) predicted class probabilities) and confidence vector distortion (which is the euclidean distance between baseline and smoothed (defended) confidence vector for all the classes). Table~\ref{tab:confidence_differences} illustrates the comparisons of predicted-class probability distortion (PCD) and confidence vector distortion (CVD) across different datasets and target models, demonstrating the extent of the defense mechanism's impact on the outputs of the models. For all classical models and datasets examined, the predicted-class probability distortion (PCD) ranges from $0.014$ to $0.340$, while the confidence vector distortion (CVD) spans from $0.020$ to $0.369$. On low-dimensional datasets such as \textit{Nursery}, \textit{Iris}, and \textit{Adult}, the observed distortions are comparatively minor: Random Forest (RF) models exhibit PCD values approximately between $0.06$ and $0.08$, and CVD values around $0.08$ to $0.11$. Conversely, Logistic Regression (LR) and Support Vector Classifier (SVC) models demonstrate even smaller distortions, with PCD primarily below $0.05$ and CVD below $0.08$. These findings indicate that the defense mechanism can substantially diminish membership signals, as demonstrated by reduced shadow-model and metric-based attack accuracies, while altering the primary class prediction probabilities by less than 10\% on average. For the high-dimensional datasets, such as \textit {Purchase} and \textit{Location}, particularly with Random Forest (RF) and Support Vector Classifier (SVC), the defense mechanism results in greater distortions. For instance, RF on \textit{Purchase-10} exhibits a PCD of $0.340$ and a CVD of $0.369$, whereas on \textit{Location}, the PCD is $0.313 $and the CVD is $0.329$. This observation aligns with the presence of stronger baseline leakage: reducing attack accuracy from over 0.9 to approximately 0.5 on these models necessitates more rigorous smoothing of confidence vectors.

\begin{figure*}[t]
    \centering
    \includegraphics[width=0.8\textwidth]{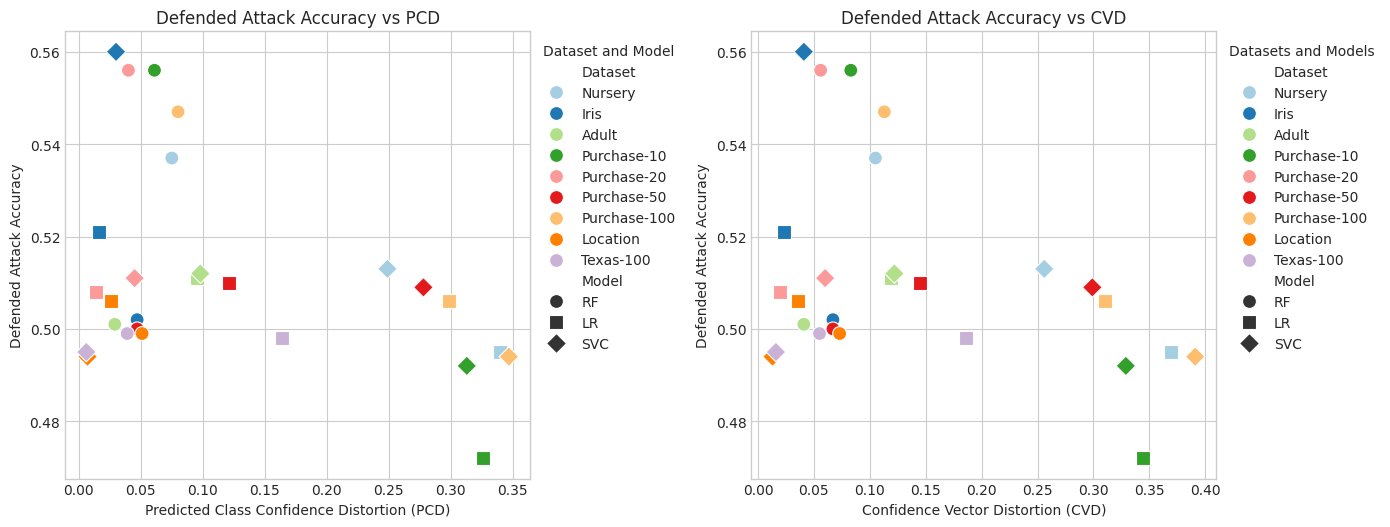}
    \caption{Correlation between attack accuracy with Neighbor defense and the proposed distortion metrics. The left plot shows the attack accuracy versus PCD, and the right plot shows the attack accuracy versus CVD across various ML models and datasets.}
    \label{fig:PCD_CVD_attack_acc}
\end{figure*}

\subsubsection{Correlation Analysis}

We first examined the relationship between defense robustness and its associated cost (PCD and CVD). Figure~\ref{fig:PCD_CVD_attack_acc} show scatter plots of attack accuracy (with neighborhood defense) versus PCD and CVD across various models and datasets. Scatter plots do not reveal any direct corelation between the defended attack accuracies and the distortion. Instances with defended attack accuracies closer to 0.5 have PCD and CVD values across the spectrum. This indicates towards the adaptive nature of the defense mechanism, modulating the level of smoothing based on the specific membership leakage. To further explore this, we examine the relationship between defense aggressiveness and the associated cost, i.e., CVD. The defense aggressiveness is measured by the attack accuracy drop, which is the difference between attack accuracy with and without defense mechanism. We compute this across various models and datasets. 
Figure~\ref{fig:shadow_attack_drop_heatmap} presents a heatmap depicting the shadow-model attack accuracy drop across different models and datasets when employing the proposed defense mechanism. 
Conversely, Figure~\ref{fig:shadow_CVD_heatmap} displays a heatmap illustrating the extent of CVD under the same conditions. Both plots show hot spots at similar positions, suggesting a direct correlation between the drop in attack accuracy and CVD. Specifically, a higher accuracy drop is associated with increased confidence vector distortion, and vice versa. This observation underscores the adaptability of the defense mechanism across diverse scenarios. In most cases, the defended attack accuracy approaches that of random guessing; however, when the undefended model is at high risk, the defense mechanism applies aggressive smoothing to achieve the desired effect, resulting in high CVD. Conversely, in low-inference-risk scenarios, the mechanism applies minimal smoothing, resulting in lower CVD. 

The key observation is that the defense mechanism maintains zero label loss: the predicted label remains consistent, thereby ensuring that primary task accuracy is entirely unaffected by the smoothing process. Consequently, the utility cost is confined to the distortion of the confidence vector, where PCD and CVD serve as a tolerance metric for deployment across diverse domains. The mechanism follows a "pay-as-you-go" approach, sacrificing utility only in direct proportion to the inherent leakage of the target model. This ensures that well-generalized models are not penalized with unnecessary noise, while the "membership signals" of more vulnerable models are effectively obfuscated. Such adaptability renders it a practical post-hoc solution suitable for various MLaaS environments with varying levels of model leakage.

\begin{figure*}[t] 
    \centering
    \begin{subfigure}{0.48\textwidth}
        \centering
        \includegraphics[width=\linewidth]{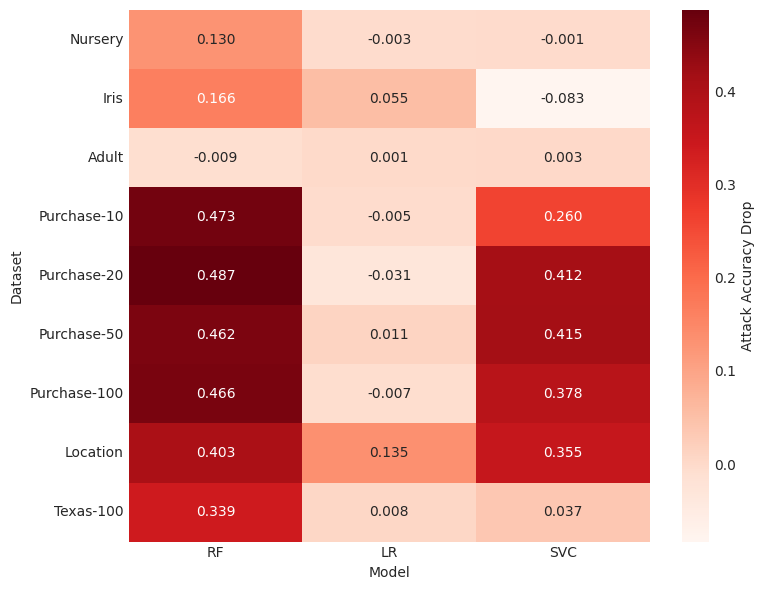}
        \caption{Absolute reduction in shadow-model attack accuracy.}
        \label{fig:shadow_attack_drop_heatmap}
    \end{subfigure}
    \hfill 
    \begin{subfigure}{0.48\textwidth}
        \centering
        \includegraphics[width=\linewidth]{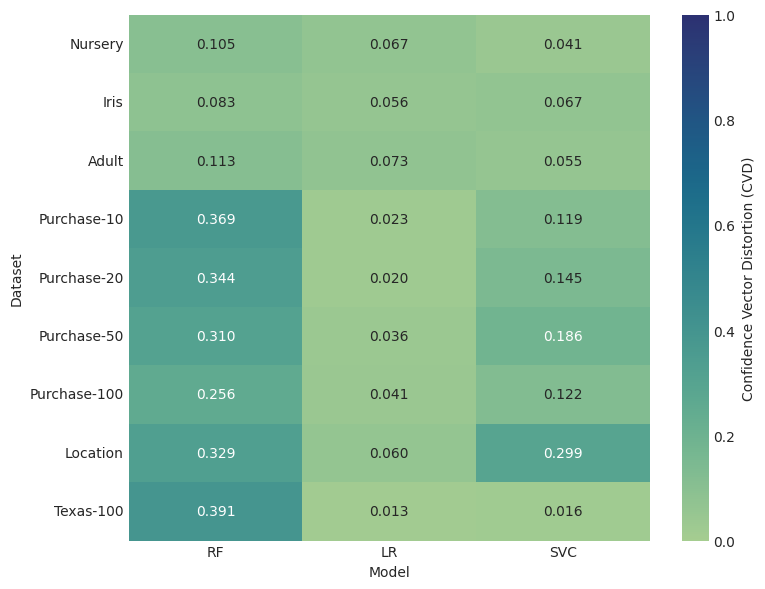}
        \caption{Confidence Vector Distortion (CVD) across test samples.}
        \label{fig:shadow_CVD_heatmap}
    \end{subfigure}
    
    \caption{Comparative heatmap analysis demonstrating the correlation between defensive effectiveness (accuracy drop) and the resulting utility distortion (CVD) across all evaluated datasets and classical ML classifiers.}
    \label{fig:combined_shadow_heatmaps}
\end{figure*}

\begin{table*}[t]
\centering
\small
\caption{Qualitative comparison of defenses mechanisms.}
\label{tab:qualitative_analysis}
\begin{tabular}{l c c c c c}
\toprule
\textbf{Defense} & \textbf{Retraining} & \textbf{Access Level} & \textbf{Latency Overhead} & \textbf{Zero Label Loss} & \textbf{Utility Loss}\\
\midrule

DP-SGD & Yes & White-box & High & Yes & High \\
\midrule

MemGuard & No & Black-box & Moderate & No & Moderate \\
\midrule

Neighborhood-Defense & No & Black-box & Low & No & Low \\

\bottomrule
\end{tabular}
\end{table*}

\begin{table*}[t]
\centering
\small
\caption{Predicted-class confidence distortion and confidence vector distortion for \textit{Purchase-10} dataset of SGD model and DP-SGD models.}
\label{tab:sgd_dpsgd_confidence_differences}
\begin{tabular}{l l c c}
\toprule
\textbf{Model Comparison} & \textbf{Metric} & \textbf{SGD vs DP-SGD} & \textbf{SGD vs SGD\_Defended} \\
&  &   & \textbf{(Neighborhood Blending)} \\
\midrule

DP-SGD ($\epsilon$ 65.15) 
& Predicted Class - Confidence Difference (PCD) & 0.0171 & 0.0056 \\
& Confidence Vector Difference (CVD) & 0.0241 & 0.0078 \\
\midrule

DP-SGD ($\epsilon$ 204.82)
& Predicted Class - Confidence Difference (PCD) & 0.0589 & 0.0037 \\
& Confidence Vector Difference (CVD) & 0.0807 & 0.0052 \\
\bottomrule
\end{tabular}
\end{table*}

\subsection{Comparative Analysis}

The evaluation results in Section \ref{sec:eval} demonstrate that our neighborhood confidence smoothing defense provides a robust and effective safeguard against membership inference attacks. In this section, we discuss the comparative performance of our mechanism relative to other existing approaches, such as MemGuard and DP-SGD.

\subsubsection{Qualitative Analysis}

Table~\ref{tab:qualitative_analysis} provides a qualitative comparison of the proposed defense mechanism against two established baselines: the training-time DP-SGD~\cite{Abadi_2016} and the post-hoc MemGuard~\cite{Jia2019memgaurd}. 
DP-SGD is incompatible with pre-trained models, requiring full retraining and white-box access to the model's internal parameters. It incurs high computational cost due to per-sample gradients and noise injection~\cite{Abadi_2016}, and suffers from high utility loss through both confidence vector distortion and label loss (see Table~\ref{tab:sgd_dpsgd_confidence_differences}). In contrast, MemGuard is a post-hoc solution that requires no retraining or internal model access and guarantees zero label loss. However, it requires training of an adversarial noise generator to optimize noise injection for robustness against inference attacks. Consequently, this incurs noticeable latency and results in moderate to high distortion of the confidence vectors, depending on the level of privacy desired~\cite{Jia2019memgaurd}.

In contrast, our proposed Neighborhood Blending mechanism is a truly lightweight, flexible solution that integrates seamlessly with existing systems. It can be easily applied to any pre-trained classifier without requiring full model retraining or white-box access. Plus, unlike other post-hoc methods like MemGuard, it doesn't require auxiliary training or repetitive noise optimization. Acting as a model-agnostic wrapper, it offers broad compatibility and ensures zero label loss while keeping confidence vector distortion minimal. This combination of high privacy effectiveness and minimal computational overhead makes Neighborhood Blending an excellent choice for real-world applications on high-utility systems.

\subsubsection{Quantitative Analysis}
For quantitative analysis, we compare the performance of DP-SGD with our mechanism on the high-dimensional dataset, i.e., \textit{Purchase-10}. The undefended target model, i.e., an SGD neural network, achieves a training accuracy of $1.00$ and a test accuracy of $0.9265$. Similarly, we train two DP-SGD versions of the same model at different epsilon ($\epsilon$) values. While training-time defenses like DP-SGD reduce this risk to near-random levels, they incur severe utility penalties. The first DP-SGD model achieves a training accuracy of $0.5844$ and a test accuracy of $0.4975$ at an $\epsilon$ value of $65.15$. The second achieves training and test accuracies of $0.7686$ and $0.7990$, respectively, at an $\epsilon$ value of $204.82$. This significant reduction in model performance highlights a substantial predictive label loss, which our proposed neighborhood-based mechanism eliminates by guaranteeing zero label loss. Furthermore, as shown in Table~\ref{tab:sgd_dpsgd_confidence_differences}, our method results in PCD and CVD values that are orders of magnitude lower than those introduced by even the high $\epsilon$ DP-SGD model, while achieving the same level of protection against membership inference risk as DP-SGD. This confirms that our post-hoc defense provides equivalent privacy protection to state-of-the-art training-time mechanisms while offering significantly superior utility preservation.

MemGuard provides protection by introducing adversarial noise tailored to a specific privacy budget ($\epsilon$), which determines the overall level of defense effectiveness. This noise is specifically designed to be adversarial against shadow-model attacks, effectively misdirecting the attack classifier's decision boundaries. However, while MemGuard is effective against shadow-model adversaries, it remains vulnerable to metric-based attacks such as confidence or entropy thresholds. A systematic evaluation by Song et al~\cite{Song2020SystematicEO} demonstrated this limitation, showing that MemGuard falls short against metric-based benchmark attacks, which achieved accuracies of  $0.691$ and $0.742$ on \textit{Location-30} and 
\textit{Texas-100} datasets, respectively. Although MemGuard obfuscates the membership signals within confidence vectors, it fails to alter the underlying statistical properties exploited by such attacks. Furthermore, Jia et al \cite{Jia2019memgaurd} reported that MemGuard achieves an average confidence vector distortion above $0.3$ for \textit{Location-30} dataset and $0.2$ for \textit{Texas-100} dataset, when achieving the inference accuracy of $0.5$. 

In contrast, our proposed neighborhood-based approach addresses these identified gaps by providing uniform robustness and superior utility. Specifically, it offers consistent protection across both learned and metric-based attack modalities. Furthermore, the mechanism achieves the same level of privacy protection with significantly lower distortion, maintaining the statistical integrity of the target model's outputs.

\section{Conclusion}
\label{sec:conclusion}

In this paper, we propose Neighborhood Blending, a lightweight, post-hoc mechanism to defend against blackbox membership inference attacks (MIAs). Neighborhood Blending effectively neutralizes MIAs by smoothing confidence vector outputs using a differentially private neighborhood sampling approach that renders member and non-member samples statistically indistinguishable from each other. Our empirical evaluation results show that Neighborhood Blending can effectively defend against MIAs and reduce attack accuracies to near-random levels across various datasets and model architectures, all while guaranteeing zero label loss. Following an adaptive "pay-as-you-go" distortion model, the defense modulates noise based on the model’s inherent membership inference risk, incurring orders of magnitude less distortion than training-time alternative defense mechanisms. Ultimately, Neighborhood Blending provides a practical, plug-and-play solution for securing MLaaS deployments without significant utility loss.

\cleardoublepage
\bibliographystyle{plain}
\bibliography{bib}

\end{document}